\def\cA{{\cal A}}
\def\cF{{\cal F}}
\def\cG{{\cal G}}
\def\cM{{\cal M}}
\def\cN{{\cal N}}
\def\cP{{\cal P}}
\def\cX{{\cal X}}
\def\cY{{\cal Y}}
\def\cZ{{\cal Z}}
\def\DD{{\mathbb D}}
\def\emp{{\mathsf P}}
\def\wh#1{\widehat{#1}}
\def\KS{{\rm KS}}
\begin{document}
\title{Achievability Results for Statistical Learning\\
Under Communication Constraints}
\author{
\authorblockN{Maxim Raginsky}
\authorblockA{Department of Electrical and Computer Engineering\\
Duke University, Durham, NC 27708, USA\\
Email: m.raginsky@duke.edu}}

\maketitle

\begin{abstract} The problem of statistical learning is to construct an accurate predictor of a random variable as a function of a correlated random variable on the basis of an i.i.d.\ training sample from their joint distribution. Allowable predictors are constrained to lie in some specified class, and the goal is to approach asymptotically the performance of the best predictor in the class. We consider two settings in which the learning agent only has access to rate-limited descriptions of the training data, and present information-theoretic bounds on the  predictor performance achievable in the presence of these communication constraints. Our proofs do not assume any separation structure between compression and learning and rely on a new class of operational criteria specifically tailored to joint design of encoders and learning algorithms in rate-constrained settings.
\end{abstract}

\thispagestyle{empty}

\section{Introduction}
\label{sec:intro}

Let $X \in \cX$ and $Y \in \cY$ be jointly distributed random variables. The problem of statistical learning is to design an accurate predictor of the {\em output variable} $Y$ from the {\em input variable} $X$ on the basis of a number of independent {\em training samples} drawn from their joint distribution, with very little or no prior knowledge of that distribution. The present paper focuses on the achievable performance of learning schemes when the learning agent only has access to a finite-rate description of the training samples.

This problem of {\em learning under communication constraints} arises in a variety of contexts, such as distributed estimation using a sensor network, adaptive control, or repeated games. In these and other scenarios, it is often the case that the agents who gather the training data are geographically separated from the agents who use these data to make inferences and decisions, and communication between these two types of agents is possible only over rate-limited channels. Hence, there is a trade-off between the communication rate and the quality of the inference, and it is of interest to characterize this trade-off mathematically.

This paper follows on our earlier work \cite{Rag07a} and presents improved bounds on the achievable performance of statistical learning schemes operating under two kinds of communication constraints: (a) the entire training sequence is delivered to the learning agent over a rate-limited noiseless digital channel, and (b) the input part of the training sequence is available to the learning agent with arbitrary precision, while the output part is delivered, as before, over a rate-limited channel. Whereas \cite{Rag07a} has looked at schemes where the finite-rate description of the training data was obtained through vector quantization, effectively imposing a separation structure between compression and learning, here we remove this restriction.  

We show that, under certain regularity conditions, there is no penalty for compression of the training sequence in the setting (a). This is due to the fact that the encoder can reliably estimate the underlying distribution (in the metric specifically tailored for the learning problem at hand) and then communicate the finite-rate description to the learning agent, who can then find the optimum predictor for the estimated distribution. The setting (b), however, is radically different: because the encoder has no access to the input part of the training sample, it cannot estimate the underlying distribution. Instead, the encoder constructs a finite-rate description of the output part using a specific kind of a vector quantizer, namely one designed to minimize the expected distance between the underlying distribution (whatever it may happen to be) and the empirical distribution of the input/quantized output pairs. Our achievability result for the setting (b) uses a learning-theoretic generalization of recent work by Kramer and Savari \cite{KraSav07} on rate-constrained communication of probability distributions.

The problem of learning a pattern classifier under rate constraints was also treated in a recent paper by Westover and O'Sullivan \cite{WesSul08}. They assumed that the underlying probability distribution is known, and the rate constraint arises from the limitations on the memory of the learning agent; then the problem is to design the best possible classifier (without any constraints on its structure). The motivation for the work in \cite{WesSul08} comes from biologically inspired models of learning. The approach of the present paper is complementary to that of \cite{WesSul08}. We consider a more general, decision-theoretic formulation of learning that includes regression as well as classification, but allow only vague prior knowledge of the underlying distribution and assume that the class of available predictors is constrained. Thus, while \cite{WesSul08} presents information-theoretic bounds on the performance of {\em any} classifier (including ones that are fully cognizant of the generative model for the data), here we are concerned with the performance of constrained learning schemes that must perform well in the presence of uncertainty about the underlying distribution.

The novel element of our approach is that both the operational criteria used to design the encoders and the learning algorithm, and the regularity conditions that must hold for rate-constrained learning to be possible, involve a tight coupling between the available prior knowledge about the underlying distribution and the set of predictors available to the learning agent. Planned future work includes obtaining converse theorems (lower bounds) and applying our formalism to specific classes of predictors used in statistical learning theory.

\section{Preliminaries and problem formulation}
\label{sec:prelims}

A very general decision-theoretic formulation of the learning problem, due to Haussler \cite{Hau92}, goes as follows. We have a family $\cP$ of probability distributions on $\cZ \deq \cX \times \cY$ and a class $\cF$ of measurable functions $f : \cZ \to \R$. For any $P \in \cP$, define
$$
L(f,P) \deq \E_P [f(Z)] \equiv \int_\cZ f(z) dP(z), \qquad f \in \cF
$$
and
$$
L^*(\cF,P) \deq \inf_{f \in \cF} L(f,P),
$$
where we assume that the infimum is achieved by some $f^* \in \cF$. The family $\cP$ represents prior knowledge about the joint distribution of $X$ and $Y$; each function $f \in \cF$ corresponds to the loss incurred by a particular predictor of $Y$ based on $X$. This framework covers, for instance, the following standard scenarios:
\begin{itemize}
\item {\em classification} --- $\cX \subseteq \R^d$, $\cY = \{1,\ldots,M\}$, and $\cF$ consists of functions of the form
$$
f(x,y) =  I_{\{g(x) \neq y\}}, \qquad g \in \cG
$$
where $I_{\{\cdot\}}$ is the indicator function, and $\cG$ is a given family of {\em classifiers}, i.e., measurable functions $g : \cX \to \{1,\ldots,M\}$. Any $f^* \in \cF$ that achieves $L^*(\cF,P)$ corresponds to some $g^* \in \cG$ that has the smallest classification error: $P(g^*(X) \neq Y) = \inf_{g \in \cG} P(g(X) \neq Y)$.

\item {\em regression} --- $\cX \subseteq \R^d$, $\cY \subseteq \R$, and $\cF$ consists of functions of the form
$$
f(x,y) = (g(x) - y)^2, \qquad g \in \cG
$$
where $\cG$ is a given family of {\em estimators}, i.e., measurable functions $g : \cX \to \R$. Any $f^* \in \cF$ that achieves $L^*(\cF,P)$ corresponds to some $g^* \in \cG$ that has the smallest mean squared error: $\E_P[(g^*(X)-Y)^2] = \inf_{g \in \cG} \E_P[(g(X) - Y)^2]$.
\end{itemize}
These are instances of {\em supervised learning} problems. {\em Unsupervised} settings, where $\cY = \varnothing$ (such as density estimation or clustering),  can also be accommodated by Haussler's framework. In this paper we focus only on the supervised case; thus, we will assume that $|\cY| \ge 2$. Then the learning problem is to construct, for each $n \in \N$, an approximation to $f^*$ on the basis of a {\em training sequence} $Z^n = \{Z_i\}^n_{i=1}$, where $Z_i = (X_i,Y_i)$ are i.i.d.\ according to some unknown $P \in \cP$.

Formally, a {\em learning scheme} (or {\em learner}, for short) is a sequence $\{ \wh{f}_n \}^\infty_{n=1}$ of maps $\wh{f}_n : \cZ^n \times \cZ \to \R$, such that $\wh{f}_n(z^n,\cdot) \in \cF$ for all $z^n \in \cZ^n$. Let $Z = (X,Y) \sim P$ be independent of the training sequence $Z^n$. The main quantity of interest is the {\em generalization error}
$$
L(\wh{f}_n,P) = \E\Big[ \wh{f}_n(Z^n,Z) \Big| Z^n \Big] \equiv \int_\cZ \wh{f}_n(Z^n,z) dP(z),
$$
which is a random variable that depends on the training sequence $Z^n$. Under suitable regularity conditions on $\cP$ and $\cF$, one can show that there exist learning schemes that are {\em probably approximately correct} (PAC), i.e., for every $\epsilon > 0$ and $P \in \cP$,
\begin{equation}
\lim_{n \to \infty} P \left( Z^n : L(\wh{f}_n,P) > L^*(\cF,P) + \epsilon \right) = 0
\label{eq:PAC}
\end{equation}
(see, e.g.,\ Vidyasagar \cite{Vid03}). A more modest goal is to ensure that the {\em excess loss} $L(\wh{f}_n,P) - L^*(\cF,P)$ is small, either in probability or in expectation.

We are interested in the achievable excess loss in situations where there is a rate-constrained channel between the source of the training data and the learning agent. Specifically, we shall consider the following two scenarios, depicted in Figs.~\ref{fig:type_I} and \ref{fig:type_II}, respectively.

\begin{figure}[htb]
\centerline{\includegraphics[width=0.8\columnwidth]{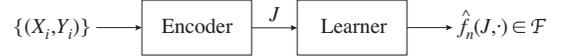}}
\caption{\label{fig:type_I} Type I set-up: the encoder has full observation of the training samples.}
\end{figure}

In the first set-up, shown in Fig.~\ref{fig:type_I}, the learner observes the training data through a noiseless digital channel that can transmit a fixed finite number of bits per training pair $Z = (X,Y)$. A scheme for learning operating at rate $R$ is specified by a sequence $\{(e_n,\wh{f}_n)\}^\infty_{n=1}$, where $e_n : \cZ^n \to \{1,2,\ldots,M_n\}$ is the {\em encoder} and $\wh{f}_n : \{1,2,\ldots,M_n\} \to \cF$ is the {\em learner}, such that $\limsup_{n \to \infty} n^{-1} \log M_n \le R$. For each $n$, the output of the learner is a function $\wh{f}_n(J,\cdot) \in \cF$, where $J = e_n(Z^n)$ is the finite-rate description of $Z^n$ provided by the encoder. We shall refer to this as Type I set-up.

\begin{figure}[htb]
\centerline{\includegraphics[width=0.8\columnwidth]{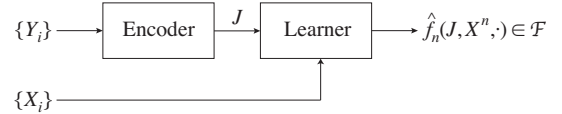}}
\caption{\label{fig:type_II} Type II: the encoder sees only the output  part of the training sequence.}
\end{figure}

In the second set-up, shown in Fig.~\ref{fig:type_II}, the learner has perfect observation of the input ($\cX$-valued) part of the training sequence, while the output ($\cY$-valued part) is delivered over a rate-limited noiseless digital channel. A scheme for learning operating at rate $R$ is  a sequence $\{(e_n,\wh{f}_n)\}^\infty_{n=1}$, where $e_n : \cY^n \to \{1,2,\ldots,M_n\}$ is the encoder and $\wh{f}_n : \cX^n \times \{1,2,\ldots,M_n\} \to \cF$ is the learner, such that $\limsup_{n \to \infty} n^{-1} \log M_n \le R$. For each $n$, the output of the learner is a function $\wh{f}_n(J,X^n,\cdot) \in \cF$, where $J = e_n(Y^n)$ is the finite-rate description of $Y^n$ provided by the encoder.

\sloppypar We shall often abuse notation and let $\wh{f}_n$ denote also the function in $\cF$ returned by the learner. The main object of interest is the generalization error
$$
L(e_n,\wh{f}_n,P) \deq \E \Big[ \wh{f}_n(W_n,Z) \Big| Z^n \Big], \qquad P \in \cP
$$
\sloppypar\noindent where $Z = (X,Y) \sim P$ is assumed independent of $\{Z_i\}^n_{i=1}$, and $W_n$ is equal to $J = e_n(Z^n)$ in a Type I set-up and to $(J,X^n)$ in a Type II set-up, where  $J = e_n(Y^n)$. We are interested in the achievable values of the asymptotic expected excess loss. We say that a pair $(R,\Delta)$ is {\em achievable for $(\cF,\cP)$} if there exists a scheme $\{(e_n,\wh{f}_n)\}^\infty_{n=1}$ operating at rate $R$, such that
$$
\limsup_{n \to \infty} \E L(e_n,\wh{f}_n,P) \le L^*(\cF,P) + \Delta, \qquad \forall P \in \cP.
$$

\section{Achievability theorems}
\label{sec:achieve}

In this section, we prove two theorems about achievable pairs $(R,\Delta)$ in Type I and Type II settings. The key idea in both cases is that the encoder needs to provide enough information at rate $R$ for the learner to estimate the expected value of each $f \in \cF$ to within $\Delta$.

\subsection{Notation, preliminaries and assumptions}
\label{ssec:notation}

We assume that the space $\cZ$ is equipped with an appropriate $\sigma$-algebra $\cA$. Typical cases of interest in learning theory are $\cX \subset \R^d$ and $\cY$ finite (classification) or $\cX \subseteq \R^d$ and $\cY \subseteq \R$ (regression), with the usual Borel $\sigma$-algebras. The space of all probability measures on $(\cZ,\cA)$ will be denoted by $\cM(\cZ)$. $\cF$ is a class of measurable functions from $(\cZ,\cA)$ into $[0,B]$ for some $0 < B < +\infty$; to avoid various measurability issues, we also assume throughout that $\cF$ is countable.  We shall identify signed measures $\mu$ on $(\cZ,\cA)$ with real-valued linear functionals $f \mapsto \mu(f)$ on $\cF$, where $\mu(f) \deq \int_\cZ f d\mu$. Thus, to each $\mu$ we can associate the $\ell^\infty(\cF)$-norm
$$
\| \mu \|_\cF \deq \sup_{f \in \cF} |\mu(f)|.
$$
For an $n$-tuple $z^n \in \cZ^n$, $\emp_{z^n}$ will denote the corresponding empirical measure: $\emp_{z^n} = n^{-1}\sum^n_{i=1} \delta_{z_i}$, where $\delta_z$ is the Dirac measure (point mass) concentrated at $z \in \cZ$. We assume that $\cF$ is a {\em Glivenko--Cantelli (GC) class} \cite{WaaWel96}, i.e.,
\begin{equation}
\lim_{n \to \infty} \| \emp_{Z^n} - P \|_\cF = 0, \qquad \mbox{a.s.}
\label{eq:GC}
\end{equation}
for every $P \in \cM(\cZ)$. In other words, the class $\cF$ is such that, for each $P \in \cM(\cZ)$, the sample averages $\emp_{Z^n}(f)$ converge to the theoretical averages $P(f)$ uniformly over $\cF$. This is a standard assumption in statistical learning theory \cite{WaaWel96,Vid03}.

\subsection{Type I schemes}
\label{ssec:type_I}

We now show that, in a Type I set-up, there is no penalty for compression of the training sequence, provided the family $\cP$  is not too ``rich." Our notion of richness will pertain to the geometry of $\cP$ w.r.t.\ the $\| \cdot \|_\cF$ norm. Given some $\epsilon > 0$, we say that a finite set $\{P_1,\ldots,P_M\} \subset \cP$ is an {\em $\epsilon$-net} for $\cP$ if
$$
\sup_{P \in \cP} \min_{1 \le m \le M} \| P - P_m \|_\cF \le \epsilon.
$$
We define the {\em covering number} $N_\cF(\epsilon,\cP)$ as the cardinality of the minimal $\epsilon$-net of $\cP$, and the {\em Kolmogorov $\epsilon$-entropy} of $\cP$ as $H_\cF(\epsilon,\cP) \deq \log N_\cF(\epsilon,\cP)$ \cite{KolTih61}.

\begin{theorem} Suppose that there exists a monotone decreasing sequence $\{\epsilon_n\}^\infty_{n=1}$ of nonnegative reals, such that
\begin{equation}
H_\cF(\epsilon_n,\cP) = o(n).
\label{eq:entropy_condition}
\end{equation}
Then the pair $(0,0)$ is achievable for $(\cF,\cP)$.
\label{thm:type_I}
\end{theorem}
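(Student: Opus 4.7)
}
The plan is to have the encoder transmit the index of a quantized version of the empirical measure $\emp_{Z^n}$, and to have the learner output the $\cF$-minimizer for the quantized distribution. Specifically, for each $n$, fix an $\epsilon_n$-net $\{P_1,\ldots,P_{M_n}\} \subset \cP$ of minimal cardinality, so $\log M_n = H_\cF(\epsilon_n,\cP) = o(n)$. Define the encoder by
$$
e_n(z^n) \deq \arg\min_{1 \le m \le M_n} \| \emp_{z^n} - P_m \|_\cF
$$
(breaking ties lexicographically; measurability is ensured by countability of $\cF$). Let the learner be $\wh{f}_n(J,\cdot) \deq f^*_J$, where $f^*_J \in \cF$ attains $L^*(\cF,P_J)$ (which exists by assumption since $P_J \in \cP$). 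The rate is $\limsup_n n^{-1} \log M_n = 0 \le R$, so this scheme operates at rate $R=0$.

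The key step is to bound the excess loss in terms of $\| P - P_J \|_\cF$. Writing $f^*_P$ for a minimizer of $L(\cdot,P)$, I would argue
\begin{align*}
L(\wh{f}_n,P) - L^*(\cF,P) &= P(f^*_J) - P(f^*_P) \\
&\le P_J(f^*_J) + \| P - P_J \|_\cF - P_J(f^*_P) + \| P - P_J \|_\cF \\
&\le 2 \| P - P_J \|_\cF,
\end{align*}
since $P_J(f^*_J) \le P_J(f^*_P)$ by optimality of $f^*_J$ for $P_J$. Next, by the triangle inequality and the $\epsilon_n$-net property applied to the nearest net point $P_{m^*}$ of the true distribution $P$,
$$
\| P - P_J \|_\cF \le \| \emp_{Z^n} - P \|_\cF + \| \emp_{Z^n} - P_J \|_\cF \le 2 \| \emp_{Z^n} - P \|_\cF + \epsilon_n,
$$
where the second inequality uses that $J$ minimizes $\|\emp_{Z^n} - P_m\|_\cF$ over $m$, hence is no larger than $\|\emp_{Z^n} - P_{m^*}\|_\cF \le \|\emp_{Z^n} - P\|_\cF + \epsilon_n$.

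Combining these bounds yields
$$
\E L(\wh{f}_n,P) - L^*(\cF,P) \le 4\, \E \| \emp_{Z^n} - P \|_\cF + 2 \epsilon_n
$$
for every $P \in \cP$. The Glivenko--Cantelli assumption \eqref{eq:GC} gives $\| \emp_{Z^n} - P \|_\cF \to 0$ almost surely; since $\cF \subseteq [0,B]$ implies $\| \emp_{Z^n} - P \|_\cF \le B$ uniformly, dominated convergence lifts this to convergence in expectation. As $\epsilon_n \searrow 0$ by hypothesis, both terms vanish, giving $\limsup_{n} \E L(e_n,\wh{f}_n,P) \le L^*(\cF,P)$ for every $P \in \cP$, which establishes achievability of $(0,0)$.

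The only delicate point I anticipate is the passage from almost-sure GC convergence to convergence of the expectation of $\| \emp_{Z^n} - P \|_\cF$; this is handled cleanly by the uniform bound $B$ via dominated convergence, so no serious obstacle remains. The rest of the argument is a straightforward combination of the near-optimality of $f^*_J$ under $P_J$ with the norm-based duality $|\mu(f) - \nu(f)| \le \|\mu - \nu\|_\cF$ that was built into the problem formulation.
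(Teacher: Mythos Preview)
Your proposal is correct and follows essentially the same approach as the paper: the same encoder (nearest net point to $\emp_{Z^n}$), the same learner (minimizer under $P_J$), and the same chain of triangle-inequality bounds leading to the identical estimate $\E L(e_n,\wh{f}_n,P) - L^*(\cF,P) \le 4\,\E\|\emp_{Z^n}-P\|_\cF + 2\epsilon_n$. Your explicit invocation of dominated convergence (via the uniform bound $B$) to pass from almost-sure GC convergence to convergence in expectation is a welcome clarification that the paper leaves implicit.
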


\begin{proof} For each $n$, let $\cN_n = \{P_1,P_2,\ldots,P_{M_n}\}$ be the minimal $\epsilon_n$-net for $\cP$ w.r.t.\ $\| \cdot \|_\cF$, where $M_n = N_\cF(\epsilon_n,\cP)$. Consider the following scheme:
\begin{itemize}
\item {\em encoder} --- $e_n(Z^n) = \Argmin_{1 \le m \le M_n} \| \emp_{Z^n} - P_m \|_\cF$
\item {\em learner} --- $\wh{f}_n(J,\cdot) = \Argmin_{f \in \cF} P_J(f)$
\end{itemize}
In other words, the encoder finds the element of $\cN_n$ closest to the empirical distribution $P_{Z^n}$ in the $\| \cdot \|_\cF$ norm and transmits its index to the learner. The learner then finds the function in $\cF$ that minimizes the expected loss assuming that the true distribution is the one estimated by the encoder.

It is easy to see that the resulting scheme operates at zero rate. Indeed, from (\ref{eq:entropy_condition}),
$$
\lim_{n \to \infty} \frac{\log M_n}{n}  = \lim_{n \to \infty} \frac{ H_\cF(\epsilon_n,\cP)}{n} = 0.
$$
To bound the expected loss, assume that $P \in \cP$ is the true distribution and let $P_{m^*} \in \cN_n$ be the element of the $\epsilon_n$-net that is closest to $P$, i.e.,
$$
\| P - P_{m^*} \|_\cF = \min_{1 \le m \le M} \| P - P_m \|_\cF \le \epsilon_n.
$$
Let $J = e_n(Z^n)$. We then have
\begin{eqnarray*}
\lefteqn{L(e_n,\wh{f}_n,P) = P(\wh{f}_n)} \\
&&\le \| P - P_J \|_\cF + P_J(\wh{f}_n) \\
&&= \| P - P_J \|_\cF + L^*(\cF,P_J) \\
&&\stackrel{{\rm (a)}}{\le} 2 \| P - P_J \|_\cF + L^*(\cF,P) \\
&&\le 2 \| P - \emp_{Z^n} \|_\cF + 2 \|\emp_{Z^n} - P_J \|_\cF + L^*(\cF,P) \\
&& \stackrel{{\rm (b)}}{\le} 2 \| P - \emp_{Z^n} \|_\cF + 2 \|\emp_{Z^n} - P_{m^*} \|_\cF + L^*(\cF,P) \\
&& \le 4 \| P - \emp_{Z^n} \|_\cF + 2 \| P - P_{m^*} \|_\cF + L^*(\cF,P) \\
&& \le 4 \| P - \emp_{Z^n} \|_\cF + 2\epsilon_n + L^*(\cF,P),
\end{eqnarray*}
where (a) follows from the fact that
$$
\left| L^*(\cF,P) - L^*(\cF,P') \right| \le \| P - P' \|_\cF
$$
for any two $P,P' \in \cP$, and (b) is by construction of the encoder. The remaining steps are consequences of various definitions and the triangle inequality. Taking expectations and the limit as $n \to \infty$, we get
\begin{eqnarray*}
\lefteqn{ \lim_{n \to \infty} \E L(e_n,\wh{f}_n,P)} \\
& \le& 4 \lim_{n \to \infty} \E \|\emp_{Z^n} - P \|_\cF  + 2\lim_{n \to \infty} \epsilon_n + L^*(\cF,P).
\end{eqnarray*}
The first limit on the right-hand side of this inequality is zero by the GC property, while the second one is zero since $\epsilon_n \to 0$. Thus, $\lim_{n \to \infty} \E L(e_n,\wh{f}_n,P) \le L^*(\cF,P)$.
\end{proof}

We can give one particular example when condition (\ref{eq:entropy_condition}) will hold. Given any two probability measures $P,Q$ on $(\cZ,\cA)$, define the {\em variational distance} between them as
$$
\| P - Q \|_V \deq \sup_{\{A_i\} \subseteq \cA}\sum_i |P(A_i) - Q(A_i)|,
$$
where the supremum is over all finite $\cA$-measurable partitions of $\cZ$. Then we can define the covering numbers $N_V(\epsilon,\cP)$ and the Kolmogorov $\epsilon$-entropy $H_V(\epsilon, \cP)$. Now suppose that there exist some constants $C > 0$ and $\alpha > 0$, such that $H_V(\epsilon,\cP) \le C(1/\epsilon)^\alpha$ for small enough $\epsilon$. This will be the case, for instance, when $\cZ$ is a compact subset of a Euclidean space and all $P \in \cP$ have Lipschitz-continuous densities w.r.t.\ some dominating measure $\nu$, and all the Lipschitz constants are all bounded by some $L < + \infty$ \cite{KolTih61}. Then, since $\| P - P' \|_\cF \le B \| P - P' \|_V$ for all $P,P' \in \cP$, we will have $H_V(\epsilon,\cP) \le C'(1/\epsilon)^\alpha$ with $C' = C'(C,B,\alpha)$. Then, choosing $\epsilon_n = 1/\log n$, we will have $H_\cF(\epsilon_n,\cP) \le C' (\log n)^\alpha  = o(n)$.

\subsection{Type II schemes}
\label{ssec:type_II}

The case of Type II schemes is radically different. Whereas in a Type I scheme the encoder can use the training data to estimate the underlying distribution and then communicate its finite-rate description to the learner, in a Type II situation the encoder can only estimate the $Y$-marginal. Unless the distributions in $\cP$ can be reliably identified from their $Y$-marginals (which is a very restrictive condition), the encoder does not have enough ``learning" ability to estimate the underlying distribution. Instead, we will take the following approach.

Given $\Delta \ge 0$, let us suppose that, for each $n$, the encoder can implement a mapping $Y^n \mapsto \wh{Y}^n$, such that, whenever the training data are drawn from some $P \in \cP$ (unknown to both the encoder and the learner), the empirical distribution $\emp_{(X^n,\wh{Y}^n)}$ is, on average, at most $\Delta/4$ away from $P$ in the $\| \cdot \|_\cF$ sense, and that $n^{-1} \log |\wh{Y}^n(\cY^n)| \le R$. Then the encoder communicates a binary description $J$ of $\wh{Y}^n$ at rate $\le R$ to the learning agent, who decodes it to get $\wh{Y}^n$ and then implements the following two-step procedure:
$$
\wh{P} = \argmin_{P \in \cP} \| \emp_{(X^n,\wh{Y}^n)} - P \|_\cF, \qquad \wh{f}_n = \argmin_{f \in \cF} \wh{P}(f).
$$
Then essentially the same technique as in the proof of Theorem~\ref{thm:type_I} will give us $\E L(e_n,\wh{f}_n,P) \le L^*(\cF,P) + \Delta$ for every $P \in \cP$, thus establishing the existence of a scheme operating at rate $R$ and achieving an excess loss of $\le \Delta$ on each $P \in \cP$.

These considerations motivate the definition of the following $n$th-order operational distortion-rate function:
\begin{equation}
\wh{\DD}_n(\cP,\cF,R) \deq \inf_{\wh{Y}^n} \sup_{P \in \cP} \E_P \| \emp_{(X^n,\wh{Y}^n(Y^n))} - P \|_\cF,
\label{eq:n_DRF}
\end{equation}
where the infimum is over all $\wh{Y}^n : \cY^n \to \cY^n$, such that $n^{-1} \log |\{ \wh{Y}^n(y^n) : y^n \in \cY^n \}| \le R$. We also define the limiting operational distortion-rate function
$$
\wh{\DD}(\cP,\cF,R) \deq \lim_{n\to\infty} \wh{\DD}_n(\cP,\cF,R).
$$

We now state the achievability result for Type II schemes in terms of these operational quantities:

\begin{theorem} Given any $R \ge 0$, the pair $(R,4\wh{\DD}(\cP,\cF,R))$ is achievable.
\label{thm:type_II}
\end{theorem}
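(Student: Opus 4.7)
The plan is to realize the two-step decoding procedure sketched in the paragraph preceding the theorem, and then to mimic the triangle-inequality chain used in the proof of Theorem~\ref{thm:type_I}, with the distortion-rate quantity $\wh{\DD}_n$ absorbing the role that the Glivenko--Cantelli limit $\|\emp_{Z^n}-P\|_\cF \to 0$ played there.

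Concretely, for each $n$ I would pick a slack $\delta_n > 0$ with $\delta_n \to 0$ and, by definition of the infimum in (\ref{eq:n_DRF}), select a map $\wh{Y}^n:\cY^n \to \cY^n$ whose image has at most $M_n \le 2^{nR}$ elements and satisfies
$$\sup_{P \in \cP} \E_P \bigl\| \emp_{(X^n,\wh{Y}^n(Y^n))} - P \bigr\|_\cF \le \wh{\DD}_n(\cP,\cF,R) + \delta_n.$$
The encoder $e_n$ transmits $J \in \{1,\ldots,M_n\}$, the index of $\wh{Y}^n(Y^n)$, so that $\limsup_{n \to \infty} n^{-1}\log M_n \le R$ holds automatically. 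The learner reconstructs $\wh{Y}^n$ from $J$, combines it with its perfect observation of $X^n$, and computes $\wh{P} \in \argmin_{P \in \cP} \|\emp_{(X^n,\wh{Y}^n)}-P\|_\cF$ followed by $\wh{f}_n \in \argmin_{f \in \cF}\wh{P}(f)$; if either infimum is not attained, it can be replaced by an $o(1)$-approximate minimizer with negligible effect on the final bound.

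For the analysis, fix $P \in \cP$. Just as in step (a) of Theorem~\ref{thm:type_I}, the Lipschitzness $|L^*(\cF,P)-L^*(\cF,P')|\le \|P-P'\|_\cF$ combined with the optimality of $\wh{f}_n$ under $\wh{P}$ gives
$$L(e_n,\wh{f}_n,P) = P(\wh{f}_n) \le \|P - \wh{P}\|_\cF + L^*(\cF,\wh{P}) \le 2\|P-\wh{P}\|_\cF + L^*(\cF,P).$$
The only genuinely new step is bounding $\|P - \wh{P}\|_\cF$: since $P \in \cP$ and $\wh{P}$ is a closest element of $\cP$ to $\emp_{(X^n,\wh{Y}^n)}$, one has $\|\emp_{(X^n,\wh{Y}^n)} - \wh{P}\|_\cF \le \|\emp_{(X^n,\wh{Y}^n)} - P\|_\cF$, so by the triangle inequality $\|P-\wh{P}\|_\cF \le 2\|P - \emp_{(X^n,\wh{Y}^n)}\|_\cF$. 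Combining and taking $\E_P$,
$$\E L(e_n,\wh{f}_n,P) \le L^*(\cF,P) + 4\bigl(\wh{\DD}_n(\cP,\cF,R) + \delta_n\bigr).$$
Passing to $\limsup_{n \to \infty}$ and using $\delta_n \to 0$ together with $\wh{\DD}_n \to \wh{\DD}$ finishes the argument.

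I do not expect a substantial obstacle: the structure of the proof is essentially forced once one recognizes that Theorem~\ref{thm:type_I} splits into (i) a deterministic triangle-inequality reduction and (ii) control of the fluctuation of an empirical-type measure about $P$ in $\|\cdot\|_\cF$. Part (i) transfers verbatim after swapping $\emp_{Z^n}$ for $\emp_{(X^n,\wh{Y}^n)}$, while part (ii) is no longer handled by the GC property but is delivered directly by the definition of $\wh{\DD}_n$. The subtlest point is the attainment issue for the projection onto $\cP$, which is sidestepped by the vanishing-slack argument mentioned above.
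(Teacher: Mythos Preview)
Your proposal is correct and follows essentially the same route as the paper: the same encoder/learner construction, the same chain $P(\wh{f}_n)\le 2\|P-\wh{P}\|_\cF + L^*(\cF,P)\le 4\|P-\emp_{(X^n,\wh{Y}^n)}\|_\cF + L^*(\cF,P)$, and the same passage to the limit. The only difference is that the paper simply assumes the infimum in (\ref{eq:n_DRF}) is attained, whereas you introduce a vanishing slack $\delta_n$; this is a minor point of additional rigor rather than a different argument.
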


\begin{proof} For each $n$, let $\wh{Y}^n_* : \cY^n \to \cY^n$ be the encoder that achieves the infimum in (\ref{eq:n_DRF}). Let $\{\wh{y}^n(1),\ldots,\wh{y}^n(M_n)\}$ be some arbitrary enumeration of its codewords. Then we construct the following scheme:
\begin{itemize}
\item {\em encoder} --- $e_n(Y^n) = J$, such that $\wh{Y}^n_*(Y^n) = \wh{y}^n(J)$.
\item {\em learner} --- $\wh{f}_n(J,X^n,\cdot) = \Argmin_{f \in \cF} \wh{P}(f)$, where
$$
\wh{P} = \argmin_{P \in \cP} \| \emp_{(X^n,\wh{y}^n(J))} - P \|_\cF.
$$
\end{itemize}
The scheme $\{(e_n,\wh{f}_n)\}^\infty_{n=1}$ operates at rate $R$ owing to the fact that $n^{-1} \log M_n \le R$. As for the excess loss, we have
\begin{eqnarray*}
\lefteqn{L(e_n,\wh{f}_n,P) = P(\wh{f}_n)} \\
&\le& 2\| P - \wh{P} \|_\cF + L^*(\cF,P) \\
&\le& 2 \| P - \emp_{(X^n,\wh{y}^n(J))} \|_ \cF \\
&& \qquad \qquad + 2 \| \emp_{(X^n,\wh{y}^n(J))} - \wh{P} \|_\cF + L^*(\cF,P) \\
&\le& 4 \| P - \emp_{(X^n,\wh{y}^n(J))} \|_\cF + L^*(\cF,P) \\
&=& 4 \| P - \emp_{(X^n,\wh{Y}^n_*(Y^n))} \|_\cF + L^*(\cF,P).
\end{eqnarray*}
Taking expectations, using the fact that each $\wh{Y}^n_*$ achieves the $n$th-order optimum $\wh{\DD}_n(\cP,\cF,R)$, and then taking the limit as $n \to \infty$, we get
$$
\E L(e_n,\wh{f}_n,P) \le L^*(\cF,P) + 4 \wh{\DD}(\cP,\cF,R), \qquad \forall P \in \cP
$$
which proves the theorem.
\end{proof}

\newcounter{mytempeqncnt}

\begin{figure*}[!t]
\normalsize
\setcounter{mytempeqncnt}{\value{equation}}
\setcounter{equation}{6}
\begin{equation}
\wh{\DD}(\cP,\cF,R) \le \sup_{\alpha > 0} \inf_{\delta > 0} \sup_{P' \in \cM(\cY)} \inf_{Q_{U|Y}: \atop I(P' \times Q_{U|Y}) \le R + \alpha} \sup_{P \in \cP: \atop \| P_Y - P' \|_V \le \delta} \E_{P \times Q_{U|Y}} \| \delta_{(X,U)} - P \|_\cF.
\label{eq:oper_KS_upper}
\end{equation}
\setcounter{equation}{\value{mytempeqncnt}}
\hrulefill
\vspace*{4pt}
\end{figure*}

We would like to express $\wh{\DD}(\cP,\cF,R)$ purely in terms of information-theoretic quantities. It is relatively straightforward to derive an information-theoretic lower bound on $\wh{\DD}(\cP,\cF,R)$. To that end, we will draw upon recent work of Kramer and Savari \cite{KraSav07} on rate-constrained communication of probability distributions. The following properties of $\| \cdot \|_\cF$ are immediate:
\begin{enumerate}
\item $\| P - Q \|_\cF \le 2B$ for all $P,Q \in \cM(\cZ)$.
\item For a fixed $P$, the mapping $Q \mapsto \| Q - P \|_\cF$ is Lipschitz in the variational norm $\| \cdot \|_V$: for all $Q,Q' \in \cM(\cZ)$
$$
\left| \| P - Q \|_\cF - \| P - Q' \|_\cF \right| \le B \| Q - Q' \|_V.
$$
\item The mapping $Q \mapsto \| Q - P \|_\cF$ is convex: for any $Q = \lambda Q_1 + (1-\lambda) Q_2$ with some $\lambda \in [0,1]$ and $Q_1,Q_2 \in \cM(\cZ)$,
$$
\| Q - P \|_\cF \le \lambda \| Q_1 - P \|_\cF + (1-\lambda) \| Q_2 - P \|_\cF.
$$
\end{enumerate}
Then for each $P \in \cP$ the mapping $Q \in \cM(\cZ) \mapsto \| Q - P \|_\cF$ satisfies the requirements listed in Section~III of \cite{KraSav07}. Thus, following Kramer and Savari, we can define, for every $P \in \cP$ and every $R \ge 0$, the distortion-rate function
\begin{equation}
D_\KS(P,\cF,R) \deq \inf  \| P_{XU} - P \|_\cF,
\label{eq:KS}
\end{equation}
where the infimum is over all distributions of the triple $(X,Y,U) \in \cX \times \cY \times \cY$, such that $P_{XY} = P$, $X \to Y \to U$, and $I(Y;U) \le R$. Kramer and Savari deal only with the case when $\cX$ and $\cY$ are both finite. However, it can be shown that (\ref{eq:KS}) is equal to $\wh{\DD}(\cP,\cF,R)$ for general $\cX,\cY$ as well when $\cP$ is a singleton, $\cP = \{P\}$. The proof of this fact (omitted for lack of space) relies on the GC property (\ref{eq:GC}) and on a straightforward extension of the ``piggyback coding" technique of Wyner \cite[Lemma~4.3]{Wyn75} to general (non-finite) alphabets. Moreover, when $|\cP| \ge 2$, we have the following lower bound:

\begin{theorem}  $\wh{\DD}(\cP,\cF,R) \ge \Sup_{P \in \cP} D_\KS(P,\cF,R)$
\end{theorem}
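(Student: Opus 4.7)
The strategy is to reduce the bound to the singleton case $\cP = \{P\}$, for which the paper has already stated the identity $\wh{\DD}(\{P\},\cF,R) = D_\KS(P,\cF,R)$ (via Kramer--Savari \cite{KraSav07} together with the GC-based extension of Wyner's piggyback coding mentioned just above the theorem). Granted that singleton identity, the remaining content of the theorem is pure monotonicity of $\wh{\DD}$ in its first argument.

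First I would establish the following monotonicity claim: whenever $\cP' \subseteq \cP$, one has $\wh{\DD}_n(\cP',\cF,R) \le \wh{\DD}_n(\cP,\cF,R)$ for every $n$, and hence $\wh{\DD}(\cP',\cF,R) \le \wh{\DD}(\cP,\cF,R)$ after taking the limit in $n$. This is immediate from the definition in (\ref{eq:n_DRF}): for any fixed rate-$R$ map $\wh{Y}^n : \cY^n \to \cY^n$, the inner supremum over $\cP'$ is dominated by the supremum over the larger set $\cP$, and this pointwise inequality is preserved both by the infimum over $\wh{Y}^n$ and by the limit as $n \to \infty$.

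Applying this to $\cP' = \{P\}$ for an arbitrary $P \in \cP$ and invoking the singleton identity gives
$D_\KS(P,\cF,R) = \wh{\DD}(\{P\},\cF,R) \le \wh{\DD}(\cP,\cF,R)$.
Since $P \in \cP$ was arbitrary, taking the supremum of the left-hand side over $\cP$ yields the claimed bound.

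The only genuinely nontrivial ingredient in this proof is the singleton equality, whose derivation (piggyback coding adapted to general alphabets, combined with the GC property (\ref{eq:GC}) to handle expectations of $\|\cdot\|_\cF$) is the step the paper itself defers for lack of space; once that is granted, the passage from $\{P\}$ to a general family $\cP$ is literally one line, reflecting the elementary fact that enlarging the uncertainty set can only make a minimax problem harder.
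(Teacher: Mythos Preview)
Your monotonicity reduction is correct, but there is a circularity hiding in the appeal to the singleton identity. For the inequality you want you need only the direction $D_\KS(P,\cF,R)\le \wh{\DD}(\{P\},\cF,R)$, i.e.\ the \emph{converse} half of the singleton equality. The tools you invoke for it---Wyner's piggyback coding and the GC property---are exactly the ingredients of the \emph{achievability} half $\wh{\DD}(\{P\},\cF,R)\le D_\KS(P,\cF,R)$, which is the direction you do not need. The converse half is obtained by the standard single-letterization: from any rate-$R$ map $\wh{Y}^n$ one forms the time-averaged triple $(\bar X,\bar Y,\bar U)$, checks $P_{\bar X\bar Y}=P$, the Markov chain $\bar X\to\bar Y\to\bar U$, and $I(\bar Y;\bar U)\le R$, and uses convexity of $Q\mapsto\|Q-P\|_\cF$ together with $\E\sup\ge\sup\E$ to get $\|P_{\bar X\bar U}-P\|_\cF\le \E_P\|\emp_{(X^n,\wh Y^n)}-P\|_\cF$.

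That single-letterization is precisely the paper's own proof of this theorem, carried out directly for an arbitrary $P\in\cP$ against a code that is optimal for the full family $\cP$. So once you unpack the part of the singleton identity you actually use, you recover the paper's argument verbatim; the monotonicity step contributes nothing, because the converse computation already works uniformly in $P$ at no extra cost. Your route is not incorrect, but as written it leans on a black box whose relevant half \emph{is} the content of the theorem, and it misattributes that half to achievability machinery.
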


\begin{proof} Fix any code $\wh{Y}^n(\cdot)$ of rate $R$ that achieves $\wh{\DD}_n(\cP,\cF,R)$:
$$
\sup_{P \in \cP} \E_P \| \emp_{(X^n,\wh{Y}^n(Y^n))} - P \|_\cF = \wh{\DD}_n(\cP,\cF,R).
$$
Fix some $P \in \cP$ and let $P_{X_i,Y_i,\wh{Y}_i}$ denote the joint distribution of $(X_i,Y_i,\wh{Y}_i)$ when $(X_1,Y_1),\ldots,(X_n,Y_n)$ are i.i.d.\ according to $P$, and $\wh{Y}_i$ denotes the $i$th component of $\wh{Y}^n(Y^n)$. Also, define the random variables $\bar{X} \in \cX$, $\bar{Y} \in \cY$, and $\bar{U} \in \cY$ with the joint distribution 
$$
P_{\bar{X},\bar{Y},\bar{U}} \deq \frac{1}{n}\sum^n_{i=1} P_{X_i,Y_i,\wh{Y}_i}.
$$
Then $P_{\bar{X},\bar{Y}} = P$ and that $\bar{X} \to \bar{Y} \to \bar{U}$. Using convexity and the fact that $\E \sup_{f \in \cF} [\cdot] \ge \sup_{f \in \cF} E[\cdot]$, we get
$$
\E_P \| \emp_{(X^n,\wh{Y}^n)} - P \|_\cF \ge \| P_{\bar{X},\bar{U}} - P \|_\cF.
$$
That is, $\| P_{\bar{X},\bar{U}} - P \|_\cF \le \wh{\DD}_n(\cP,\cF,R)$ for all $P \in \cP$. Moreover, steps similar to those in \cite[Thm.~1]{KraSav07} give
$$
nR \ge H(\wh{Y}^n) \ge I(Y^n; \wh{Y}^n) \ge \sum^n_{i=1} I(Y_i; \wh{Y}_i) \ge n I(\bar{Y}; \bar{U}).
$$
Thus, we have found a triple of random variables $(\bar{X},\bar{Y},\bar{U}) \in \cX \times \cY \times \cY$, such that: (i) $P_{\bar{X},\bar{Y}} = P$, (ii) $\bar{X} \to \bar{Y} \to \bar{U}$, (iii) $\| P_{\bar{X},\bar{U}} - P \|_\cF \le \wh{\DD}_n(\cP,\cF,R)$, (iv) $I(\bar{Y};\bar{U}) \le R$. Hence, for every $P \in \cP$, $\wh{\DD}_n(\cP,\cF,R) \ge D_\KS(P,\cF,R)$. Taking the supremum over all $P \in \cP$ and then the limit as $n \to \infty$, we get the desired result.
\end{proof}

However, it is not straightforward to derive an information-theoretic upper bound on $\wh{\DD}(\cP,\cF,R)$. This would require constructing a rate-$R$ code that asymptotically achieves $\wh{\DD}(\cP,\cF,R)$. In order to prove achievability, one could take a rate-$R$ code for each ``representative" distribution in $\cP$ (assuming $\cP$ is not too rich, so that it can be represented by a slowly, e.g.,\ subexponentially, growing number of distributions), combine the codes into a union code (which will result in an asymptotically negligible rate overhead), and then devise a rule for mapping the sequence $Y^n$ into one of the codewords. However, the difficulty here is that the encoder can only estimate the $Y$-marginal of the underlying distribution and cannot select the right code based on this information alone. One (suboptimal) strategy is to bound the distortion $\| \emp_{(X^n,\wh{Y}^n)} - P \|_\cF$ by the average of single-letter functions of the form $\rho_{\cF,P}(X_i,\wh{Y}_i) \deq \| \delta_{(X_i,\wh{Y}_i)} - P \|_\cF$, where $\delta_{(X_i,\wh{Y}_i)}$ is the Dirac measure concentrated on $(X_i,\wh{Y}_i)$, and consider the new problem of finding
\begin{equation}
\inf_{\wh{Y}^n} \sup_{P \in \cP} \E_P \left[ \frac{1}{n}\sum^n_{i=1} \rho_{\cF,P}(X_i,\wh{Y}_i) \right]
\label{eq:noisy_sc_bound}
\end{equation}
where the infimum is over all rate-$R$ codes $\wh{Y}^n : \cY^n \to \cY^n$. Then (\ref{eq:noisy_sc_bound}) will be an upper bound on $\wh{\DD}_n(\cP,\cF,R)$.  Note that the problem of minimizing (\ref{eq:noisy_sc_bound}) is an instance of minimax noisy source coding \cite{DemWei03}: given a sequence of i.i.d.\ samples $(X_1,Y_1),(X_2,Y_2),\ldots$ from an unknown $P \in \cP$ and a blocklength $n$, we wish to code $Y^n$ using a rate-$R$ code, such that the sequence $X^n$ is reconstructed from the encoded data with small average $\rho_{P,\cF}(\cdot,\cdot)$ distortion. When $\cY$ is finite, a type-covering argument, as in \cite{DemWei03}, can be used to show (\ref{eq:oper_KS_upper}) at the top of this page (details are omitted for lack of space). Given any $\alpha > 0$, $\delta > 0$, and $P' \in \cM(\cY)$, the second infimum in (\ref{eq:oper_KS_upper}) is over all conditional probability distributions (transition kernels) from $\cY$ to $\cY$, such that the mutual information between $Y$ and $U$ when $Y \sim P'$ and $U|Y \sim Q_{U|Y}$, is at most $R+\alpha$. The inner supremum is over all probability distributions $P \in \cP$, such that their $Y$-marginal $P_Y$ is within $\delta$ from $P'$ in the variational norm $\| \cdot \|_V$, $P \times Q_{U|Y}$ denotes the joint distribution of $X$, $Y$ and $U$ when $(X,Y) \sim P$ and $U|Y \sim Q_{U|Y}$, and $\delta_{(X,U)}$ denotes the Dirac measure concentrated at $(X,U) \in \cX \times \cY$. We leave the problem of tightening (\ref{eq:oper_KS_upper}) for future work. Evidently, the difficulties involved in extending this technique to general $\cY$ are of the same nature as in \cite{DemWei03} and have to do with finding the right topology on $\cM(\cY)$ that would give the same uniform error bounds as for the variational distance in the finite case.

\bibliography{isit09_ale}

\end{document}